\documentclass[journal]{IEEEtran}
\usepackage{amsmath,amscd,graphics,latexsym,multicol,cite,ulem,epsf}
\usepackage{verbatim}
\usepackage{soul}
\usepackage{graphicx, subfigure}
\usepackage{amssymb}
\usepackage{amsthm}
\usepackage{mathabx}
\usepackage{dsfont}
\usepackage{bbm}
\usepackage[utf8]{inputenc}
\usepackage{caption}

\newtheorem{thm}{Theorem}
\usepackage{accents}
\newcommand{\qeq}{\accentset{(a)}{=}}

\captionsetup[figure]{font=small,skip=0pt}
\normalem
\begin{document}
\title{Tier Association Probability and Spectrum Partitioning for Maximum Rate Coverage in Multi-tier Heterogeneous Networks}
\author{Sanam Sadr, \textit{Student Member, IEEE}, and Raviraj S. Adve, \textit{Senior Member, IEEE}\thanks{The authors are with the Edward S. Rogers Sr. Department of Electrical and Computer Engineering, University of Toronto, ON, Canada (email: \{ssadr, rsadve\}@comm.utoronto.ca). They would like to acknowledge the financial support of the National Science and Engineering Research Council (NSERC) of Canada and TELUS.}}
\date{}
\maketitle
\begin{abstract}
For a wireless multi-tier heterogeneous network with orthogonal spectrum allocation across tiers, we optimize the association probability and the fraction of spectrum allocated to each tier so as to maximize rate coverage. In practice, the association probability can be controlled using a biased received signal power. The optimization problem is non-convex and we are forced to explore locally optimal solutions. We make two contributions in this paper: first, we show that there exists a relation between the first derivatives of the objective function with respect to each of the optimization variables. This can be used to simplify numerical solutions to the optimization problem. Second, we explore the optimality of the intuitive solution that the fraction of spectrum allocated to each tier should be equal to the tier association probability. We show that, in this case, a closed-form solution exists. Importantly, our numerical results show that there is essentially zero performance loss. The results also illustrate the significant gains possible by jointly optimizing the user association and the resource allocation.
\end{abstract}

\section{Introduction}
Multi-tier heterogeneous networks are receiving strong consideration as the means to meet the huge anticipated growth in traffic demand due to mobile voice, video and wireless data. Such networks comprise multiple tiers of access points (APs) where each tier differs in spatial density and transmit power. Importantly, it is expected that these APs will be deployed in a non-deterministic manner.

The analysis and design of heterogeneous networks necessitate a tractable model for the random AP locations~\cite{ghosh:12}. One common 2-D spatial model that captures this randomness is the Poisson point process (PPP) characterized by only one parameter, $\lambda$, the density of the nodes. In this model, node locations are independent and the number of nodes in disjoint regions are independent random variables. Given the density and assuming a reuse-1 spectrum allocation within each tier, such networks are interference limited. Using this model, Jo et al.~derived the outage probability and the ergodic rate~\cite{shin:J12} of a multi-tier network with flexible tier association. The authors use the signal-to-interference ratio (SIR) and a pre-specified threshold to define outage at a reference user independent of the load and availability of resources at the serving AP. Estimating the distribution of number of users per AP, this analysis was extended to derive the downlink rate distribution in a multi-tier network~\cite{singh:J13} and a two-tier network with limited backhaul capacity~\cite{singh:J14}.

One major issue in modern heterogeneous networks is tier association. Since the transmit powers of APs in each tier differ, a max-received signal power criterion would result in users connecting to the tier with higher transmit power. Although desirable from the received power point of view, the achieved rate decreases since each user receives a smaller fraction of the resources. Therefore, a popular method is to add, in dB, a chosen bias to the average received power and use this biased received power as the tier association metric. The chosen bias controls the tier association probability. Another issue of importance is dealing with inter- and intra-tier interference. Two approaches have been proposed: 1) spectrum sharing among tiers but with different reuse factors within each~\cite{yicheng:13} or fractional frequency reuse~\cite{novlan:J12}; 2) orthogonal spectrum allocation among tiers, thereby eliminating inter-tier interference (spectrum partitioning, e.g.,\cite{yicheng:14,bao:J14}). 

In this paper, we consider the latter case of spectrum partitioning. We maximize the rate coverage (the complementary cumulative distribution function (CDF) of the achieved rate) by optimizing the tier association probability and the fraction of spectrum allocated to each tier. Optimizing the rate coverage leads to a non-convex  problem, and we obtain the derivatives with respect to each optimization variable to enable efficient numerical solutions. We then explore the intuitive case of setting the tier's allocated fraction of spectrum equal to the tier association probability. In this case, we obtain a closed-form solution for the desired variables. Our numerical results show that the resulting loss in performance is negligible.

\section{System Model}
\label{system}
We consider a network comprising $K$ tiers in the downlink. Within each tier, indexed by $k$, the APs follow a homogeneous PPP $\Phi_{k}$ characterized by $\{P_{k},\lambda_{k}, R_{k}\}$ respectively denoting the transmit power of each AP in the tier, the AP density and the desired rate threshold. The tiers are organized in increasing order of density i.e., $\lambda_{1} \leq \lambda_{2} \cdots \leq \lambda_{K}$. Users are distributed in the network according to an independent PPP with density $\lambda_{u}$. Given the density $\lambda_{k}$, the number of APs belonging to tier $k$ in area $A$ is a Poisson random variable with mean $\lambda_{k}A$, that is independent of other tiers. The channel between an AP and a user is modeled as Rayleigh with average power set by the path loss exponent,~$\alpha$.

The user is associated with the tier with the largest ``biased average power", i.e., the average received power from the potential serving AP in tier $k$ is multiplied by a bias factor $B_{k} \geq 1$; the user then associates with the tier with the highest resulting product. If all the tiers have the same bias factor ($B_{j} = 1, \; \forall j$), the tier association metric is the maximum received power, hence, maximum SIR criterion (here, referred to as max-SIR). In calculating the achievable rate at the user, we assume orthogonal spectrum allocation across tiers with reuse-1 within a tier. Of the total bandwidth $W$, tier $k$ is allocated a fraction $w_k \leq 1$. Therefore, for a user connected to a specific AP in tier $k$, all other APs in tier $k$, but tier $k$ only, act as interferers.

While the tier association is only a function of tier AP density, transmit power and bias factor~\cite{shin:J12}, the actual achieved rate is a function of the allocated spectrum and the load (in terms of the connected users) at each AP. To calculate the reference user's share of spectrum, we use the \emph{average} number of users per AP in tier $k$ given by~\cite{shin:J12}:
\begin{equation}
\label{UperAP}
N_{k} = \frac{A_{k}\lambda_{u}}{\lambda_{k}},
\end{equation}
where $A_{k}$ denotes the $k$th tier association probability. Another expression for the average load, considering AP load as a random variable, is given in~\cite{singh:J13,singh:J14} as $\tilde{N_{k}} = 1 + 1.28 \frac{A_{k}\lambda_{u}}{\lambda_{k}}$; this accounts for the reference user and the implicit area biasing. This higher load will not affect the procedure to derive the overall rate coverage and formulation of the optimization problem in Section~\ref{problem}. For mathematical tractability, however, we use~\eqref{UperAP} to calculate the bandwidth to be allocated to each user; a comparison with the higher average load will be presented through numerical simulations. As in~\cite{singh:J13,yicheng:13,yicheng:14,bao:J14}, we use the proportionally fair model where each AP equally divides its available bandwidth amongst its users. The rate achieved by a user associated with tier $k$ is then given by:
\begin{equation}
\label{rate}
r_{k} = W\frac{w_{k}}{N_{k}}\log_{2}\left(1 + \gamma_{k} \right),
\end{equation}
where $w_{k}$ denotes the fraction of spectrum per AP in tier $k$, and $\gamma_{k}$ denotes the received SIR at the user associated with tier $k$. The user is said to be in coverage if it achieves a threshold rate $R_k$, i.e., $r_k \geq R_k$. In general, $R_k$ is a function of the tier.

\vspace{-0.05in}
\section{Problem Formulation}
\label{problem}
Given a total spectrum of $W$, we aim to optimize the tier association probability, and the spectrum partitioning among tiers to maximize the overall rate coverage. Using~\eqref{rate}, the probability that the user associated with tier $k$ at connection distance $r$ receives its rate threshold is given by:
\begin{equation}
\begin{array}{ll}
\mathds{P}(r_{k} \geq R_{k}\mid r)
& = \mathds{P}(W\frac{w_{k}}{N_{k}}\log_{2}(1 + \gamma_{k}) \geq R_{k}\mid r) \\
& = \mathds{P}(\gamma_{k} \geq 2^{\frac{R_{k}N_{k}}{W w_{k}}} - 1\mid r) \\
&  \qeq \displaystyle e^{-\pi \lambda_{k} r^{2}\rho(\tau_{k}, \alpha)}.
\end{array}
\end{equation}
where $\tau_{k}=2^{\frac{R_{k}N_{k}}{W w_{k}}} - 1$ is the SIR threshold given the rate threshold $R_{k}$, and ($a$) results from the probability of SIR coverage at connection distance $r$ with $ \rho(\tau_{k}, \alpha) = \tau_{k}^{2/\alpha}\int_{\tau_{k}^{-2/ \alpha}}^{\infty}\frac{1}{1 + u^{\alpha/2}}\mathrm{d}u$~\cite[Theorem 2]{andrews:11}. 
 
Having characterized the rate coverage in a single tier with average load per AP, the probability that the user is in coverage in a multi-tier network is given by the following Theorem.
\begin{thm}
\label{RateCoverage}
In a $K$-tier network with orthogonal spectrum allocation across tiers, and APs in each tier distributed according to a homogeneous PPP with density $\lambda_{k}$, the probability of the rate coverage is given by:
\begin{equation}
\boldsymbol{R_{c}} = \sum_{k = 1}^{K}\frac{1}{A_{k}^{-1} + \rho(\tau_{k}, \alpha)},
\end{equation}
where $\tau_{k} = 2^{\frac{R_{k}N_{k}}{Ww_{k}}} - 1$, and $A_{k}$ denotes the association probability to tier $k$.
\end{thm}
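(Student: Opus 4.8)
The plan is to compute the overall rate coverage by conditioning on the tier with which the user associates and then integrating out the randomness in the connection distance. The starting point is the single-tier conditional coverage already established in the excerpt, namely $\mathds{P}(r_k \geq R_k \mid r) = e^{-\pi\lambda_k r^2 \rho(\tau_k,\alpha)}$. By the total probability law over the $K$ mutually exclusive association events, I would write $\boldsymbol{R_c} = \sum_{k=1}^{K} \mathds{P}(r_k \geq R_k, \text{user associates with tier } k)$, so the task reduces to evaluating each per-tier term.

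First I would fix tier $k$ and obtain the distribution of the connection distance $r$ \emph{conditioned on} the user associating with tier $k$. Under the homogeneous PPP with density $\lambda_k$, the nearest-AP distance has the Rayleigh-type density, and the biased-power association rule (which, as noted, depends only on density, power, and bias) must be folded in; the standard result is that the conditional density of the serving distance given association with tier $k$ is of the form proportional to $r\, e^{-\pi \lambda_k r^2 /A_k}$, normalized by $A_k$. Concretely, I would use the known fact (from~\cite{shin:J12}) that the tier-$k$ association probability is $A_k$ and that the conditional serving-distance pdf is $f_{r|k}(r) = \dfrac{2\pi\lambda_k}{A_k}\, r\, e^{-\pi \lambda_k r^2 / A_k}$.

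Next I would de-condition: the per-tier coverage contribution is the product of the association probability $A_k$ and the expectation of the conditional coverage over this serving-distance density, i.e. $A_k \int_0^\infty e^{-\pi\lambda_k r^2 \rho(\tau_k,\alpha)} f_{r|k}(r)\, \mathrm{d}r$. Substituting the density and pulling out constants, the integrand collapses to $\int_0^\infty 2\pi\lambda_k\, r\, e^{-\pi\lambda_k r^2(\rho(\tau_k,\alpha) + A_k^{-1})}\,\mathrm{d}r$, which is an elementary Gaussian-type integral in $r^2$. Evaluating it yields $\dfrac{1}{\rho(\tau_k,\alpha)+A_k^{-1}}$, and summing over $k$ gives the claimed expression for $\boldsymbol{R_c}$.

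The main obstacle is justifying the conditional serving-distance density and the accompanying normalization: one must argue carefully that, given association with tier $k$ under the biased-power rule, the serving AP is the nearest tier-$k$ AP and that integrating its conditional coverage against the correct density produces exactly the factor $A_k^{-1}$ in the denominator. Getting the normalization right—so that the $A_k$ prefactor and the $A_k^{-1}$ inside the exponent combine cleanly—is where the algebra must be handled with care; once the density is correctly identified, the remaining integral is routine.
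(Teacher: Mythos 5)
Your proposal is correct and takes essentially the same approach as the paper: both decompose $\boldsymbol{R_c}$ over the $K$ disjoint association events and reduce each per-tier term to the same elementary integral $\int_0^\infty 2\pi\lambda_k r\, e^{-\pi\lambda_k r^2\left(\rho(\tau_k,\alpha)+A_k^{-1}\right)}\mathrm{d}r = \left(A_k^{-1}+\rho(\tau_k,\alpha)\right)^{-1}$. The only difference is cosmetic: you fold the association in via the conditional serving-distance density $f_{r\mid k}(r) = (2\pi\lambda_k/A_k)\, r\, e^{-\pi\lambda_k A_k^{-1} r^2}$ prefixed by $A_k$, while the paper writes the identical joint density as $\mathds{P}(n=k\mid r)\, f_r(r)$ with $f_r(r) = 2\pi\lambda_k r\, e^{-\pi\lambda_k r^2}$ — a Bayes-rule rearrangement of the same computation, with the independence of the SIR event and the association event given $r$ (guaranteed by the orthogonal spectrum allocation) used implicitly in both.
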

\begin{proof}
See the Appendix.
\end{proof}
\vspace{-0.1in}
\subsection{Optimization Problem}
Using the result derived in Theorem~\ref{RateCoverage}, the optimization problem with the objective of maximizing the total probability of rate coverage can be formulated as:
\vspace{-0.05in}
\begin{equation}
\begin{array} {ll}
\label{OptProb}
\displaystyle \max_{\{A_{k}\}_{k =1}^{K},\{w_{k}\}_{k =1}^{K}} & \displaystyle \sum_{k = 1}^{K} \frac{1}{A_{k}^{-1} + \rho(\tau_{k}, \alpha)} \\
\mbox{subject to:}  & \displaystyle \sum_{k = 1}^{K} A_{k} = 1, \sum_{k = 1}^{K} w_{k} = 1 \\
& A_{k} \geq 0, w_{k} \geq 0 \; \; k = 1,\cdots K. \\
\end{array}
\end{equation}
Although this optimization problem is non-convex, there is a relation between the first derivative of the objective function with respect to each pair of optimization variables which simplifies the gradient-based schemes to obtain the local optima. Defining $f_{k} (A_{k},w_{k})= \frac{A_{k}}{1 + A_{k}\rho(\tau_{k}, \alpha)}$, the equivalent unconstrained objective function is given by:
\vspace{-0.06in}
\begin{eqnarray}
\label{LagrangeOrig}
\mathcal{L}(A_{k}, \eta, \mu) & = &  \sum_{k = 1}^{K} f_{k}(A_{k}, w_{k}) - \eta \left(\sum_{k = 1}^{K}A_{k}- 1\right) \nonumber \\ & & - \mu\left(\sum_{k=1}^{K}w_{k} - 1 \right),
\end{eqnarray}
where $\eta$ and $\mu$ are the Lagrangian multipliers. The KKT conditions (in addition to two equality constraints) are:
\vspace{-0.06in}
\begin{equation}
\label{KKT}
\frac{\partial f_{k}(A_{k}, w_{k})}{\partial A_{k}} = \eta, \; \; \; \frac{\partial f_{k}(A_{k}, w_{k})}{\partial w_{k}} = \mu \; \; \; \forall k.
\end{equation}
The derivative of $f_{k}(A_{k}, w_{k})$ w.r.t.~$w_{k}$ is given by:
\begin{equation}
\label{FtoW}
\frac{\partial f_{k}(A_{k}, w_{k})}{\partial w_{k}} = \frac{- \frac{\partial \rho(\tau_{k},\alpha)}{\partial w_{k}}}{(A_{k}^{-1} + \rho(\tau_{k}, \alpha))^2},
\end{equation}
\begin{equation}
\label{eq1}
\hspace*{-1.1in} \mathrm{where} \hspace*{0.3in} \frac{\partial \rho(\tau_{k},\alpha) }{\partial w_{k}} =  \frac{\partial \rho(\tau_{k},\alpha)}{\partial \tau_{k}}\cdot \frac{\partial \tau_{k}}{\partial w_{k}},
\end{equation}
\begin{equation}
\frac{\partial \rho(\tau_{k},\alpha)}{\partial \tau_{k}} = \frac{2}{\alpha \tau_{k}}\left[ \rho(\tau_{k}, \alpha) + \frac{1}{1 + \tau_{k}^{-1}}\right],
\end{equation}
\begin{equation}
\label{eq2}
\frac{\partial \tau_{k}}{\partial w_{k}} = \log(2) (\frac{-R_{k}N_{k}}{W w_{k}^{2}})2^{R_{k}N_{k}/W w_{k}}.
\end{equation}
Using~\eqref{eq1}-\eqref{eq2} in~\eqref{FtoW}, and $2^{R_{k}N_{k}/Ww_{k}} = \tau_{k} + 1$, we have:
\begin{multline}
\label{FinalToW}
\frac{\partial f_{k}(A_{k}, w_{k})}{\partial w_{k}}  = 
\\  \frac{2 \log(2) / \alpha}{(A_{k}^{-1} + \rho(\tau_{k}, \alpha))^2}\left[\frac{(1+\tau_{k}) \rho(\tau_{k}, \alpha)+ \tau_{k}}{\tau_{k}} \left(\frac{R_{k}N_{k}}{W w_{k}^{2}} \right) \right]. 
\end{multline}
Similarly, the derivative of $f_{k}(A_{k}, w_{k})$ w.r.t.~$A_{k}$ is given by:
\begin{equation}
\label{eq4}
\frac{\partial f_{k}(A_{k}, w_{k})}{\partial A_{k}} = \frac{1 - A_{k}^{2}\partial \rho(\tau_{k},\alpha)/ \partial A_{k}}{(1 + A_{k}\rho(\tau_{k}, \alpha))^{2}},
\end{equation}
\begin{equation}
\hspace*{-1.1in} \mathrm{where} \hspace*{0.3in}\frac{\partial \rho(\tau_{k},\alpha)}{\partial A_{k}}  =  \frac{\partial \rho(\tau_{k},\alpha)}{\partial \tau_{k}}\cdot \frac{\partial \tau_{k}}{\partial A_{k}},
\end{equation}
\begin{equation}
\frac{\partial \tau_{k}}{\partial A_{k}} = \log(2) (\frac{R_{k}\lambda_{u}}{\lambda_{k}W w_{k}})2^{R_{k}N_{k}/W w_{k}}.
\end{equation}
Hence,
\begin{equation}
\label{eq5}
\frac{\partial \rho(\tau_{k}, \alpha)}{\partial A_{k}} = \frac{2 \log(2) }{\alpha}\left[ \frac{(1 + \tau_{k})\rho(\tau_{k}, \alpha) + \tau_{k}}{\tau_{k}}\left(\frac{R_{k}\lambda_{u}}{\lambda_{k}Ww_{k}}\right) \right].
\end{equation}
Using~\eqref{eq5} in~\eqref{eq4} results in:
\begin{equation}
\label{FinalToA}
\frac{\partial f_{k}(A_{k}, w_{k})}{\partial A_{k}} = \frac{1 - A_{k}^{2}\frac{2 \log(2)}{ \alpha }\left[\frac{(1 + \tau_{k})\rho(\tau_{k}, \alpha)  + \tau_{k}}{\tau_{k}} (\frac{R_{k}\lambda_{u}}{\lambda_{k}Ww_{k}})\right] }{(1 + A_{k}\rho(\tau_{k}, \alpha))^{2}}.
\end{equation}
Comparing~\eqref{FinalToW} with~\eqref{FinalToA}, we have:
\begin{equation}
\frac{\partial f_{k}(A_{k}, w_{k})}{\partial A_{k}} = \frac{1}{(1 + A_{k}\rho(\tau_{k}, \alpha))^{2}} - \frac{w_{k}}{A_{k}}\frac{\partial f_{k}(A_{k}, w_{k})}{\partial w_{k}}.
\label{eq:relation}
\end{equation}
Note that the conditions in~\eqref{KKT} imply that any solution to the optimization problem must satisfy the fixed point equation:
\begin{equation}
\eta = \frac{1}{(1 + A_{k}\rho(\tau_{k}, \alpha))^{2}} - \frac{w_{k}}{A_{k}} \mu.
\end{equation}
However, the complicated relation between $\rho(\tau_{k}, \alpha)$ and the variables $A_{k}$ and $w_{k}$ makes it difficult to derive a closed-form solution. Hence, we will use~\eqref{FinalToW} and~\eqref{eq:relation} to simplify the interior-point method to solve the optimization problem.

\subsection{Equating the Two Fractions}
This section explores a simple, and intuitive, solution to the optimization problem in~\eqref{OptProb}. The optimization variables are the fraction of users associated with each tier ($A_k$) and the fraction of spectrum allocated to that tier ($w_k$). Therefore, an intuitive solution is to set $A_k = w_k$, i.e., to use the same fraction for both variables. As our results will show, this solution is extremely close to the (numerical) solution to~\eqref{OptProb}.
In the case of $A_k = w_k$, the optimization problem is reduced to:
\begin{equation}
\begin{array} {ll}
\label{OptProb2}
\displaystyle \max_{\{A_{k}\}_{k =1}^{K}} & \displaystyle \sum_{k = 1}^{K} \frac{1}{A_{k}^{-1} + \rho(\bar{\tau_{k}}, \alpha)} \\
\mbox{subject to:}  & \displaystyle \sum_{k = 1}^{K} A_{k} = 1 \\
& A_{k} \geq 0 \; \; k = 1,\cdots K,
\end{array}
\end{equation}
where $\{A_{k}\}_{k = 1}^{K}$ is now the only set of optimization variables and $\bar{\tau_{k}} = 2^{R_{k}\lambda_{u}/W\lambda_{k}} - 1$. It is easy to show that this problem is concave. The equivalent unconstrained objective function is:
\begin{equation}
\label{Lagrange}
\mathcal{L}(A_{k}, \eta) = \sum_{k = 1}^{K} \frac{A_{k}}{1 + A_{k}\rho(\bar{\tau_{k}}, \alpha)} - \eta \left(\sum_{k = 1}^{K}A_{k}- 1\right).
\end{equation}
Differentiating~\eqref{Lagrange} with respect to $A_{k}$, and setting the derivative to $0$, we obtain:
\begin{equation}
\frac{\partial\mathcal{L}(A_{k}, \eta)}{\partial A_{k}} = \frac{1}{(1 +A_{k}\rho(\bar{\tau_{k}}, \alpha))^{2}} - \eta = 0,
\end{equation}
\begin{equation}
\label{AOpt}
\Longrightarrow A_{k} = \frac{\sqrt{1/\eta} - 1}{\rho(\bar{\tau_{k}}, \alpha)}.
\end{equation}
Applying $\sum_{k=1}^{K}A_{k} = 1$, we have $\sqrt{1/ \eta} - 1 = \displaystyle \frac{1}{\sum_{k=1}^{K}1/\rho(\bar{\tau_{k}}, \alpha)}$. Using this expression in~\eqref{AOpt}, the optimum tier association and spectrum allocation for tier $k$ is given by:
\begin{equation}
\label{OptValues}
A_{k}^{*} = w_{k}^{*} = \displaystyle \frac{1/ \rho(\bar{\tau_{k}}, \alpha)}{\sum_{k = 1}^{K}1/\rho(\bar{\tau_{k}},\alpha)},
\end{equation}
i.e., in this special case when we equate the two fractions, we have a closed-form solution to~\eqref{OptProb}.
\section{Simulation Results}
\label{simulations}
We consider a three-tier network ($K=3$) with $\lambda_{u} = 5/100, \{\lambda_{1}, \lambda_{2}, \lambda_{3}\} = \{0.01, 0.05, 0.2\}\lambda_{u}$ and $\{P_{1},P_{2}, P_{3}\}$ = \{46, 30, 20\} dBm denoting the user density, tiers' AP density and transmit power respectively. We obtain the optimum tier association probability and spectrum partitioning for three different scenarios: 1) $\{A_{k}\}_{k = 1}^{K}$ and $\{w_{k}\}_{k = 1}^{K}$ are interior-point solutions to the optimization problem in~\eqref{OptProb}; 2) $\{A_{k} = w_{k}\}_{k = 1}^{K}$ are solutions using~\eqref{OptValues}; 3) $\{w_{k}\}_{k = 1}^{K}$ are solutions to the optimization problem in~\eqref{OptProb} when $B_{k} = 1 \;\forall k$, i.e., the max-SIR scenario. We compare the obtained results with the optimum solution through a brute force search. The optimum tier association and spectrum partitioning with the higher average load per AP, $\tilde{N_{k}}$, are also presented for comparison. We use $\alpha = 3.5$ as the path loss exponent for all tiers.

Fig.~\ref{fig:RateCov} shows the overall rate coverage for equal and different tier rate thresholds. Clearly, the max-SIR performs much worse than optimizing the relevant fractions, illustrating the advantage of offloading (if done jointly with the resource allocation). More interesting is the rate coverage achieved when the tier's share of spectrum is equal to the share of users it serves as given by~\eqref{OptValues}. While the overall network coverage is almost identical to the optimum case, there is a slight difference in tier association and spectrum partitioning as shown in Fig.~\ref{fig:TierAsso}. Note that if the rate threshold increases, the tier's probability of coverage decreases. Hence, the network coverage is maximized by moving users (followed by the required spectrum) from the tier with the increasing rate threshold to the other tiers.
  
\begin{figure}
     \begin{center}
        \subfigure[Same rate threshold for all tiers.]{%
            \label{fig:ER}
            \includegraphics[width=0.39\textwidth]{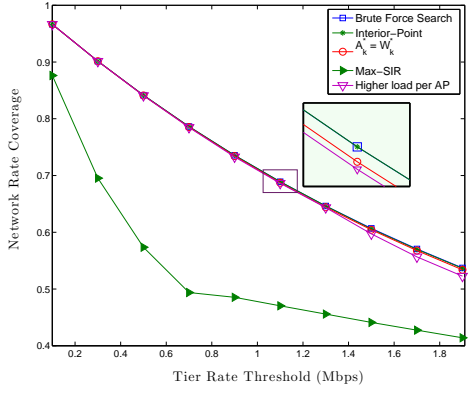}
        }%
        \\
        \subfigure[Different rate thresholds; $\{R_{1},R_{2}\} = \{0.5,1\}$ Mbps.]{%
           \label{fig:DR}
           \includegraphics[width=0.39\textwidth]{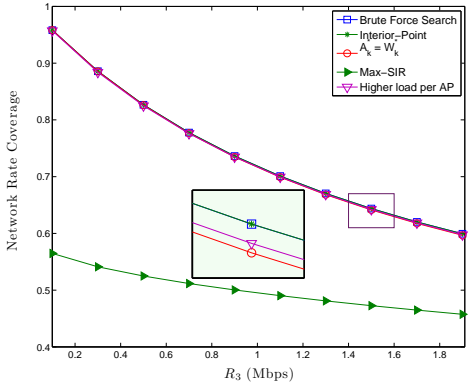}
        } 
        \end{center}
    \caption{Overall rate coverage in a 3-tier network with (a) the same rate threshold for all tiers, (b) different rate thresholds. In both cases, $\{P_{1},P_{2},P_{3}\}$ = \{46, 30, 20\}dBm and $\{\lambda_{1},\lambda_{2},\lambda_{3}\} = \{0.01,0.05,0.2\} \times \lambda_{u}$.}
   \label{fig:RateCov}
\end{figure}

\begin{figure}
     \begin{center}
        		\includegraphics[width=0.5\textwidth]{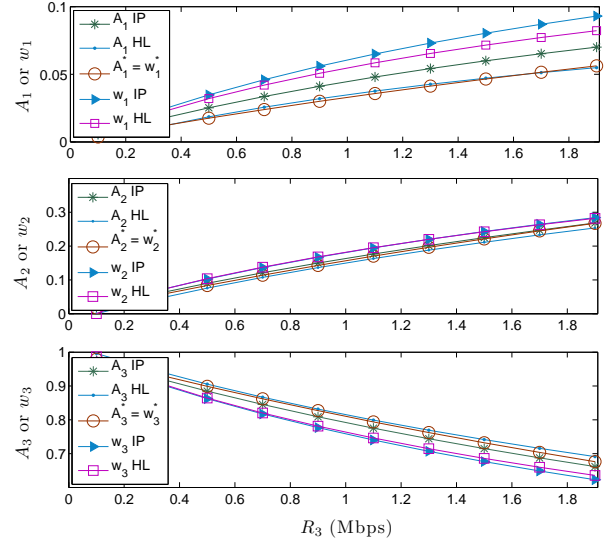}
        	\end{center}
	
    \caption{Comparing the optimum tier association and spectrum partitioning for different tiers with the solution to \eqref{OptProb2}, i.e, $A_{k}^{*} = w_{k}^{*}$. $\{P_{1},P_{2},P_{3}\} = \{46,30,20\}$dBm and $\{\lambda_{1},\lambda_{2},\lambda_{3}\} = \{0.01,0.05,0.2\} \times \lambda_{u}$. The results obtained by the Interior-point method and the optimum values considering the higher average load $\tilde{N_{k}}$ are referred to as `IP' and `HL' respectively.}
          \label{fig:TierAsso}
\end{figure}

\section{Conclusions}
We considered the problem of optimizing the tier association probability and spectrum portioning in a multi-tier network with the objective of maximizing the rate coverage. Our results show a significantly improved coverage by jointly optimizing the user association and spectrum allocation. Interestingly, the intuitive solution of equating the two fractions results in negligible performance loss. This result is important from the system design point of view: (i) it simplifies the optimization problem reducing it to one with a closed-form solution given by~\eqref{OptValues}; (ii) the tier with the smallest fraction of spectrum also serves the least number of users. Considering a reasonable threshold for $A_{k}$ (hence $w_{k}$), a tier can potentially be eliminated from the network with little impact on the network rate coverage.
\label{conclusion}

\appendix[Proof of Theorem \ref{RateCoverage}]
\label{AppendixA}
\begin{proof}
This is a special case of the rate coverage derived in~\cite{singh:J13} with average number of users per AP and orthogonal spectrum allocation across tiers. The probability that a user connects to tier $k$ at connection distance $r$ is given by~\cite[Lemma 1]{shin:J12}~$\mathds{P}(n = k\mid r)=\prod_{j = 1,j\neq k}^{K}e^{-\pi \lambda_{j}(P_{j}B_{j}/P_{k}B_{k})^{2/\alpha}r^{2}}.$ Therefore, the probability of the joint event that the user connects to tier $k$ and meets its rate threshold is given by:
\begin{equation}
\begin{array}{l}
\mathds{P}(r_{k} \geq R_{k}, n = k)  = \mathds{E}_{r} \Big[\mathds{P}(\gamma_{k} \geq \tau_{k}, n = k\mid r)\Big] \\
 = \mathds{E}_{r} \Big[\mathds{P}(\gamma_{k} \geq \tau_{k}\mid r)\cdot \mathds{P}(n = k\mid r)\Big] \\
= \displaystyle \int_{r=0}^{\infty} e^{-\pi \lambda_{k} r^{2}\rho(\tau_{k}, \alpha)} \cdot \left(\prod_{j = 1,j\neq k}^{K}e^{-\pi \lambda_{j}\left(\frac{P_{j}B_{j}}{P_{k}B_{k}}\right)^{2/\alpha}r^{2}}\right)f_{r}\left(r\right)\mathrm{d}r \\
\qeq \displaystyle \int_{r=0}^{\infty} 2\pi\lambda_{k}r e^{-\pi \lambda_{k} r^{2} \Big[  \rho(\tau_{k}, \alpha) + \sum_{j=1}^{K} \frac{\lambda_{j}}{\lambda_{k}}\left(\frac{P_{j}B_{j}}{P_{k}B_{k}}\right)^{2/\alpha}\Big]} \mathrm{d}r \\
= \frac{1}{\left(A_{k}^{-1} + \rho(\tau_{k}, \alpha)\right)},
\end{array}
\end{equation}
where ($a$) results from the distribution of the connection distance in a PPP network with density $\lambda_{k}$ given by $f_{r}\left(r\right) = 2\pi \lambda_{k}r e^{-\pi \lambda_{k} r^{2}}$, and $A_{k}^{-1}=\sum_{j=1}^{K} \frac{\lambda_{j}}{\lambda_{k}}\left(\frac{P_{j}B_{j}}{P_{k}B_{k}}\right)^{2/\alpha}$~\cite{shin:J12}. Note that we do not consider a random load at each AP, but constant average load only affected by the user and AP densities and the tier association probabilities. Using the sum probability of disjoint events, the overall probability of rate coverage is:
\begin{equation}
\boldsymbol{R_{c}} = \displaystyle \sum_{k=1}^{K}\mathds{P}(r_{k} \geq R_{k}, n = k) = \displaystyle \sum_{k=1}^{K}\frac{1}{A_{k}^{-1} + \rho(\tau_{k}, \alpha)},
\end{equation}
and the proof is complete.
\end{proof}

\bibliographystyle{IEEEtran.bst}
\bibliography{RefCommLett}
\end{document}